\definecolor{darkblue}{rgb}{0,0,0.4}
\definecolor{darkred}{rgb}{0.5,0,0}
\theoremstyle{plain}
\newtheorem{theorem}{Theorem} 
\newtheorem{lemma}{Lemma}
\renewcommand{\vec}[1]{\overrightarrow{#1}}
\newcommand{\eps}{\varepsilon}
\renewcommand{\angle}{\measuredangle}
\title{Strongly Monotone Drawings of Planar Graphs\footnote{This research was 
	initiated during the Geometric Graphs Workshop Week (GGWeek'15) at the 
	FU Berlin in September 2015. Work by P. Kindermann was supported by DFG 
	grant SCHU2458/4-1. Work by M. Scheucher was partially supported by the ESF 
	EUROCORES programme EuroGIGA -- CRP ComPoSe, Austrian Science Fund (FWF): 
	I648-N18 and FWF project P23629-N18 `Combinatorial Problems on Geometric Graphs'.}}
\author{%
	Stefan~Felsner\thanks{Institut f\"ur Mathematik, 
  Technische Universit\"at Berlin, Germany}
	\and
	Alexander~Igamberdiev\thanks{LG Theoretische Informatik,
  FernUniversit\"at in Hagen, Germany}
	\and
	Philipp~Kindermann\footnotemark[3]
	\and
	Boris~Klemz\thanks{Institute of Computer Science, 
  Freie Universit\"at Berlin, Germany}
	\and
	Tamara~Mchedlidze\thanks{Institute of Theoretical Informatics, 
  Karlsruhe Institute of Technology, Germany}
	\and
	Manfred~Scheucher\thanks{Institute of Software Technology,
  Graz University of Technology, Austria}}
\date{}
\begin{document}
\maketitle

\begin{abstract}
A straight-line drawing of a graph is a \emph{monotone drawing} if for
each pair of vertices there is a path which is monotonically increasing in 
some direction, and it is called a \emph{strongly monotone drawing} if the
direction of monotonicity is given by the direction of the line segment
connecting the two vertices.

We present algorithms to compute crossing-free strongly monotone
drawings for some classes of planar graphs; namely, 3-connected planar
graphs, outerplanar graphs, and 2-trees. The drawings of 3-connected
planar graphs are based on primal-dual circle packings.
Our drawings of outerplanar graphs depend  on a new algorithm
that constructs strongly monotone drawings of trees which are also convex.
For irreducible trees, these drawings are strictly convex.
\end{abstract}

\section{Introduction}
To find a path between a source vertex and a target vertex is one of
the most important tasks when data are given by a graph, c.f.~Lee et
al.~\cite{lppfh-ttfgv-BELIV06}.  This task may serve as criterion for
rating the quality of a drawing of a graph. Consequently researchers
addressed the question of how to visualize a graph such that finding a
path between any pair of nodes is easy. A user study of Huang et
al.~\cite{heh-agrb-PVis09} showed that, in performing path-finding
tasks, the eyes follow edges that go in the direction of the target
vertex. This empirical study triggered the research topic of finding
drawings with presence of some kind of geodesic paths. Several
formalizations for the notion of geodesic paths have been proposed,
most notably the notion of strongly monotone paths.  Related drawing
requirements are studied under the titles of self-approaching drawings
and greedy drawings.

Let~$G=(V,E)$ be a graph. We say that a path~$P$ is \emph{monotone
with respect to} a direction (or vector)~$d$  if the
orthogonal projections of the vertices of~$P$ on a line with direction
$d$ appear in the same order as in~$P$.  A straight-line
drawing of~$G$ is called \emph{monotone} if for each pair of
vertices~$u,v\in V$ there is a connecting path that is monotone with
respect to some direction. To support the path-finding tasks
it is useful to restrict the monotone direction for each path
to the direction of the line segment
connecting the source and the target vertex: a
path~$v_1v_2\ldots v_k$ is called \emph{strongly monotone} if it is
monotone with respect to the vector~$\vec{v_1v_k}$.  A straight-line
drawing of~$G$ is called \emph{strongly monotone} if each pair of
vertices~$u,v\in V$ is connected by a strongly monotone path.

In this paper, we are interested in strongly  monotone drawings which are 
also planar.  If crossings are allowed, then any strongly  monotone
drawing of a spanning tree of $G$ yields a strongly monotone
drawing of $G$, this has been observed by Angelini et
al.~\cite{acbfp-mdg-12}.

\paragraph{Related Work.}

In addition to (strongly) monotone drawings, there are several other drawing
styles that support the path-finding task. The earliest studied is the concept
of \emph{greedy drawings}, introduced by Rao et al.~\cite{rpss-grwli-MOBICOM03}. 
In a greedy drawing, one can find a source--target path by iteratively
selecting a neighbor that is closer to the target.  Triangulations
admit crossing free greedy drawings~\cite{d-gdt-10}, and more
generally 3-connected planar graphs have greedy
drawings~\cite{lm-gems-10}.  Trees with a vertex of degree at least~6
have no greedy drawing. N\"ollenburg and Prutkin~\cite{np-egdt-ESA13}
gave a complete characterization of trees that admit a greedy
drawing.

Greedy drawings can have some undesirable properties, e.g., a greedy
path can look like a spiral around the target vertex. To get rid of
this effect, Alamdari et al.~\cite{acglp-sag-GD12} introduced a
subclass of greedy drawings, so-called \emph{self-approaching
  drawings} which require the existence of a source--target path such
that for any point $p$ on the path the distance to another point~$q$
is decreasing along the path.  In greedy drawings this is only
required for $q$ being the target-vertex.  These drawings are related
to the concept of self-approaching
curves~\cite{ikl-sac-MPCPS95}. Alamdari et al. provide a complete
characterization of trees that admit a self-approaching drawing.

Even more restricted are \emph{increasing-chord drawings}, which
require that there always is a source--target path which is
self-approaching in both directions.  N\"ollenburg et
al.~\cite{npr-osaic-arXiv14} proved that every triangulation has a
(not necessarily planar) increasing-chord drawing and every planar
3-tree admits a planar increasing-chord drawing. Dehkordi et
al.~\cite{dfg-icgps-GD14} studied the problem of connecting a given
point set in the plane with an increasing-chord graph.

Monotone drawings were introduced by Angelini et al.~\cite{acbfp-mdg-12} 
They showed that any $n$-vertex tree admits a monotone drawing on a grid of 
size $O(n^{1.6}) \times O(n^{1.6})$ or $O(n) \times O(n^2)$.  They also showed
that any 2-connected planar graph has a monotone drawing having exponential 
area. Kindermann et al.~\cite{kssw-omdt-GD14} improved the area bound to
$O(n^{1.5})\times O(n^{1.5})$ even with the property that the drawings
are convex. The area bound was further lowered to 
$O(n^{1.205})\times O(n^{1.205})$ by He and He~\cite{hh-cmdt-COCOON15}.
Hossain and Rahman~\cite{hr-mgdpg-FAW14} showed that every connected 
planar graph admits a monotone drawing on a grid of size $O(n) \times O(n^2)$. 
For 3-connected planar graphs, He and He~\cite{hh-md3pg-ESA15} proved
that the convex drawings on a grid of size $O(n) \times O(n)$,  produced by the 
algorithm of Felsner~\cite{f-cdpgo-O01}, 
 are monotone. For the fixed embedding setting, Angelini et 
al.~\cite{adkmrsw-mdgfe-13} showed that every plane graph admits a monotone
drawing with at most two bends per edge, and all 2-connected plane graphs and
all outerplane graphs admit a straight-line monotone drawing.

Angelini et al.~\cite{acbfp-mdg-12} also introduced the concept of
\emph{strong monotonicity} and gave an example of a drawing of a
planar triangulation that is not strongly monotone. Kindermann et 
al.~\cite{kssw-omdt-GD14} showed that every tree admits a strongly monotone
drawing. However, their drawing is not necessarily strictly convex and 
requires more than exponential area. Further, they presented an infinite
class of 1-connected graphs that do not admit strongly monotone drawings.
N\"ollenburg et al.~\cite{npr-osaic-arXiv14} have recently shown that 
exponential area is required for strongly monotone drawings
of trees and binary cacti.

There are some relations among the aforementioned drawing styles.
Plane increasing-chord drawings are self-approaching  by definition but also  strongly 
monotone. Self-approaching drawings are greedy by definition.
On the other hand, (plane) self-approaching drawings are not 
necessarily monotone, and vice-versa.

\paragraph{Our Contribution.} 
After giving some basic definitions used throughout the paper in
Section~\ref{sec:preliminaries}, we present four results.  First, we
show that any 3-connected planar graph admits a strongly monotone
drawing induced by primal-dual circle packings
(Section~\ref{sec:3connected}). Then, we answer in the affirmative the
open question of Kindermann et al.~\cite{kssw-omdt-GD14} on whether
every tree has a strongly monotone drawing which is strictly convex.
We use this result to show that every outerplanar graph admits a
strongly monotone drawing (Section~\ref{sec:trees-outerplanar}).
Finally, we prove that 2-trees can be drawn strongly monotone
(Section~\ref{sec:2trees}).  All our proofs are constructive and admit
efficient drawing algorithms. Our main open question is whether every
planar 2-connected graph admits a plane strongly monotone drawing
(Section~\ref{sec:conclusion}).  It would also be interesting to
understand which graphs admit strongly monotone drawings on a grid of
polynomial size.

\section{Definitions}\label{sec:preliminaries}
Let $G=(V,E)$ be a graph. 
A \emph{drawing}~$\Gamma$ of~$G$ maps the vertices of~$G$ to distinct points in 
the plane and the edges of~$G$ to simple Jordan curves between their end-points. 
A planar drawing induces a \emph{combinatorial embedding}
which is the class of topologically equivalent drawings. In particular, an 
embedding specifies the connected  regions of the plane, called \emph{faces}, whose boundary 
consists of a cyclic sequence of edges. The unbounded face is called the 
\emph{outer face}, the other faces are called \emph{internal faces}.
An embedding can also be defined by a \emph{rotation 
system}, that is, the circular order of the incident edges around a vertex. 
Note that both definitions are equivalent for planar graphs.

A drawing of a planar graph is a \emph{convex drawing} if it is
crossing free and internal faces are realized as convex
non-overlapping polygonal regions. The \emph{augmentation} of a drawn
tree is obtained by substituting each edge incident to a leaf by a ray
which is begins with the edge and extends across the leaf. A
drawing of a tree is a \emph{(strictly) convex drawing} if the augmented drawing
is crossing free and has (strictly) convex faces, i.e., all the angles
of the unbounded polygonal regions are less or equal to (strictly less than) $\pi$.  Note
that strict convexity forbids vertices of degree~2. We call a tree
\emph{irreducible} if it contains no vertices of degree~2.  It has been
observed before that a convex drawing of a tree is also monotone but a
monotone drawing is not necessarily convex,
see~\mbox{\cite{acbfp-mdg-12,acm-mpaoa-SoCG89}}.

A \emph{$k$-tree} is a graph which can be produced from a complete
graph~$K_{k+1}$ and then repeatedly adding vertices in such a way that 
the neighbors of the added vertex form a $k$-clique. We say 
that the new vertex is \emph{stacked}  on the clique.
By construction $k$-trees are chordal graphs. They can also be
characterized as maximal graphs with treewidth~$k$, that is, no edges can 
be added without increasing the treewidth. Note that $1$-trees are 
equivalent to trees and $2$-trees are equivalent to  maximal series-parallel
graphs.

We denote an undirected edge between two vertices~$a,b\in V$ by
$(a,b)$.  In a drawing of~$G$, we may identify each vertex with the
point in the plane it is mapped to.  For two vectors~$x$ and~$y$, we
define the angle~$\angle(x,y)$ as the smallest angle between the two
vectors, that is, $\angle(x,y) = \arccos \left( \frac{\langle
    x,y\rangle}{|x||y|}\right)$, and for three points $p,q,r$, we
define $\angle{pqr} = \angle(\vec{qp},\vec{qr})$.  We say that a
vector $x$ is monotone with respect to~$y$ if~$\angle(x,y)<\pi/2$.
This yields an alternative definition of a strongly monotone path: A
path $v_1v_2\dots v_k$ is strongly monotone if
$\angle(\vec{v_iv_{i+1}},\vec{v_1v_k}) <\pi/2$, for $1\le i\le k-1$. Note
that we interpret monotonicity as strict monotonicity, i.e., we do not
allow edges on the path that are orthogonal to the segment between the
endpoints.

\section{3-Connected Planar Graphs}\label{sec:3connected}
\def\CC{{\cal C}}
\def\RR{\mathbb{R}}

In this section, we prove the following theorem. 

\begin{theorem}
Every 3-connected planar graph has a strongly monotone drawing.
\end{theorem}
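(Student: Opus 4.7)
My plan is to take the canonical drawing arising from a primal-dual circle packing of $G$ and verify strong monotonicity by constructing explicit monotone paths via a greedy walk on the resulting convex drawing.

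First I would apply the Brightwell--Scheinerman primal-dual circle packing theorem (valid since $G$ is 3-connected and planar) to obtain a primal family $\{C_v\}_{v\in V}$ and a dual family $\{C_f\}_{f\in F}$ of circles such that primal circles of adjacent vertices are externally tangent, dual circles of faces sharing an edge are externally tangent, and $C_v$ meets $C_f$ orthogonally whenever $v\in\partial f$, with both tangencies for each primal-dual edge pair occurring at a common point. Drawing each vertex at the center of its primal circle and each edge as the straight segment between centers yields a crossing-free convex straight-line drawing $\Gamma$ in which every internal face is a convex polygon and each edge is perpendicular to the segment joining the two dual centers on either side.

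Next, fix $u\neq v$, set $d := \vec{uv}$, and build a path greedily. Starting from $w := u$, shoot a ray from $w$ in direction $d$; generically it enters the region of a uniquely determined face $f$ of $G$ incident to $w$. Let $w_L, w_R$ be the two neighbors of $w$ along $\partial f$. Since $f$ is drawn as a convex polygon, its interior angle at $w$ is strictly less than $\pi$, and $d$ lies strictly inside this angle, so
\[
\angle(\vec{ww_L}, d) + \angle(d, \vec{ww_R}) < \pi,
\]
forcing at least one of the two summands to be strictly less than $\pi/2$. Move to the corresponding neighbor and iterate. Each step strictly increases the inner product of the current vertex with $d$, so the walk is injective and hence finite; by the local existence argument it can never halt at any vertex other than $v$, so it must terminate at $v$, and every edge of the produced path forms an acute angle with $\vec{uv}$, yielding the desired strongly monotone path.

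The delicate part of the argument is the case where $w$ sits on the outer face and the ray in direction $d$ enters the unbounded outer region: there the enclosing ``face'' is not drawn convexly, so the angular inequality can fail and the walk could conceivably overshoot $v$ along the outer cycle. I plan to handle this either by exploiting the M\"obius freedom of primal-dual circle packings to normalize the outer face so that the greedy walk is forced to stay inside the drawing until it reaches $v$ (for instance by sending a circle incident to $v$ to infinity, or by fixing the outer face to be a triangle containing $v$ on its boundary), or by modifying the rule at boundary vertices to advance the walk along the outer cycle in the direction of $v$ whenever the ray would otherwise escape. Verifying that one of these fixes works uniformly for every vertex pair is where I expect the essential technical work of the proof to lie; the remaining ingredients follow directly from the angular rigidity of the primal-dual packing.
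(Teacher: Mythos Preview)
Your greedy walk has a gap that is more fundamental than the boundary issue you flag at the end. The walk is determined entirely by $u$ and the fixed direction $d=\vec{uv}$; after the very first step the ray from the current vertex in direction $d$ need no longer point anywhere near $v$, and nothing prevents the walk from drifting orthogonally to $d$. Once the current vertex has $d$-projection exceeding that of $v$, the target becomes permanently unreachable, since every further step strictly increases that projection. Your sentence ``by the local existence argument it can never halt at any vertex other than $v$'' is unjustified and in fact literally false when $v$ is an interior vertex: the local rule does not halt there either, so even if the walk happens to pass through $v$ you would need a separate stopping test. Your two proposed fixes do not address this: a M\"obius normalization chosen depending on the pair $(u,v)$ no longer produces a \emph{single} drawing, and walking along the outer cycle toward $v$ is not in general monotone in direction $d$. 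Notice also that your argument uses only convexity of the bounded faces and never the dual circles; if it worked, it would show that every convex drawing of a 3-connected planar graph is strongly monotone, a much stronger statement than the theorem.

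The paper's proof is not a greedy walk. It takes the straight segment $\ell_{\text s}$ from $p_u$ to $p_v$ and exploits the \emph{face} circles: the plane is partitioned into vertex regions $R_v$ and face disks $D_f$, and $\ell_{\text s}$ alternates between the two types. For each face disk $D_f$ that $\ell_{\text s}$ crosses, the path takes either the upper or the lower boundary arc of $f$, the choice being dictated by which side of $\ell$ contains the center $p_f$. Monotonicity of each arc follows because the tangency points $p_{e_j}$ of its edges lie on $C_f$ on one side of $\ell$, so their $x$-coordinates are monotone and interleave with those of the arc's vertices. Thus the segment $\ell_{\text s}$ anchors the construction at both endpoints, and the dual circles supply the monotonicity; these are exactly the two ingredients missing from your proposal.
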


\begin{proof}
We show that the straight-line drawing corresponding to a primal-dual
circle packing of a graph $G$ is already strongly monotone.  The
theorem then follows from the fact that any 3-connected planar graph~$G=(V,E)$
admits a primal-dual circle packing. This was shown by Brightwell and
Scheinerman~\cite{brightwell:1993}; for a comprehensive
treatment of circle packings we refer to Stephenson's book~\cite{st-icp-05}.

A \emph{primal-dual circle packing} of a plane graph~$G$ consists of
two families $\CC_V$ and~$\CC_F$ of circles such that, there is a
bijection $v \leftrightarrow C_v$ between the set $V$ of vertices of
$G$ and circles of $\CC_V$ and a bijection $f \leftrightarrow C_f$
between the set $F$ of faces of $G$ and circles of $\CC_F$.  Moreover,
the following properties hold:
\begin{enumerate}[label=(\arabic*)]
	\item 
		The circles in the family $\CC_V$
		are interiorly disjoint and their contact graph is~$G$, i.e.,
		$C_u \cap C_v \neq \emptyset$ if and only if $(u,v) \in E(G)$.
	\item
		If $C_o\in\CC_F$ is the circle of the outer face~$o$, then 
		the circles of $\CC_F\setminus\{C_o\}$ are interiorly
		disjoint while $C_o$ contains all of them. The contact graph 
		of $\CC_F$ is the dual~$G^*$ of~$G$, 
		i.e., $C_f \cap C_g \neq \emptyset$ if and only if $(f,g) \in E(G^*)$.
	\item
		The circle packings~$C_V$ and~$C_F$ are orthogonal, i.e.,
		if~$e=(u,v)$ and the dual of~$e$ is~$e^*=(f,g)$, then there is a point $p_e =
		C_u \cap C_v = C_f \cap C_g$; moreover, the common tangents $t_{e^*}$ of 
		$C_u$, $C_v$ and $t_{e}$ of $C_f$, $C_g$ cross perpendicularly in $p_e$.
\end{enumerate}

\begin{figure}[tb]
  \centering
  \subfloat[\label{fig:primalDualRep}]{
    \centering
    \includegraphics{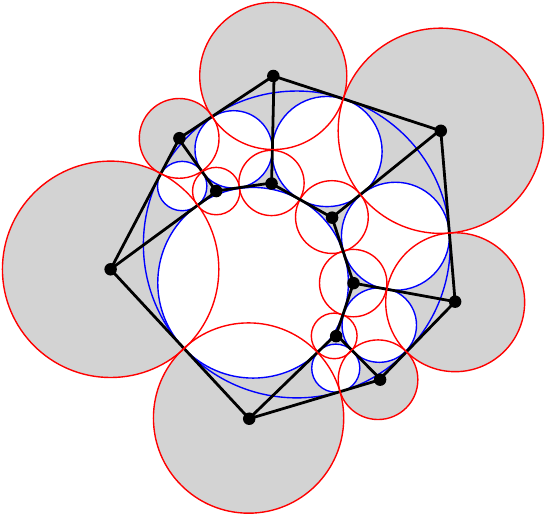}}
  \hfill
  \subfloat[\label{fig:3conMonPath}]{
    \centering
    \includegraphics{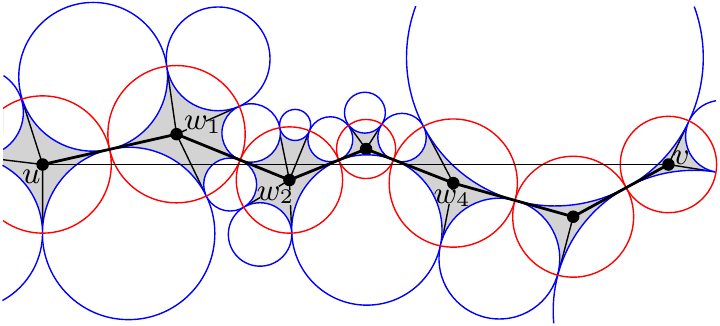}}
  \caption{(a) Drawing $\Gamma$ of 3-connected graph $G=(V,E)$. 
    Red circles are vertex circles $\mathcal C_V$, Blue circles are face circles $\mathcal C_F$. Regions of faces in white, regions of vertices in gray. (b) A strongly monotone path (thick edges) 
    from~$u$ to~$v$.}
  \label{fig:3con}
\end{figure}

\noindent
Let a primal-dual circle packing of a graph $G$ be given. For each vertex $v$,
let $p_v$ be the center of the corresponding circle $C_v$.  By placing each
vertex $v$ at $p_v$, we obtain a planar straight-line drawing~$\Gamma$
of~$G$. In this drawing, the edge $e=(u,v)$ is represented by the segment with
end-points $p_u$ and $p_v$ on $t_e$.  The face circles are inscribed circles
of the faces of~$\Gamma$; moreover, $C_f$ is touching each boundary edge of
the face~$f$; see Figure~\ref{fig:primalDualRep}. 

A straight-line drawing $\Gamma^*$ of the dual $G^*$ of $G$ with the dual
vertex of the outer face $o$ at infinity can be obtained similarly by placing the dual vertex
of each bounded face $f$ at the center of the corresponding circle $C_f$.
In this drawing, a dual edge $e^*=(f,o)$ is represented by the ray 
supported by $t_{e^*}$ that starts at $p_f$ and contains $p_e$.

In the following, we will make use of a specific partition~$\Pi$
of the plane. The regions of~$\Pi$ correspond to the
vertices and the faces of $G$. For a vertex or face~$x$, let
$D_x$ be the interior disk of~$C_x$.
\begin{itemize}
\item The region $R_f$ of a bounded
face $f$ is $D_f$.
\item The region $R_v$ of a vertex $v$ is obtained from the
disk $D_v$ by removing the intersections with the disks of bounded faces, i.e.,
$R_v = D_v \setminus \bigcup_{f\neq o} R_f = D_v \setminus
\bigcup_{f\neq o} D_f$;
see~Figure~\ref{fig:primalDualRep}.
\end{itemize}
To get a partition of the whole plane, we assign the complement of the already
defined regions to the outer face, i.e, $R_o = \RR^2 \setminus (\bigcup_{f\neq o} R_f \cup \bigcup_{v} R_v) =
\RR^2 \setminus (\bigcup_{f\neq o} D_f \cup \bigcup_{v} D_v).$

Note that the edge-points $p_e$ are part of the boundary of four regions of
$\Pi$ and if two regions of $\Pi$ share more than one point on the boundary,
then one of them is a vertex region~$R_v$, the other is a face-region $D_f$, and
$(v,f)$ is an incident pair of $G$.

We are now prepared to prove the strong monotonicity of~$\Gamma$. Consider two
vertices~$u$ and~$v$ and let $\ell$ be the line spanned by $p_u$ and $p_v$.
W.l.o.g., assume that~$\ell$ is horizontal and~$p_u$ lies left of $p_v$.  Let
$\ell_\text{s}$ be the directed segment from $p_u$ to $p_v$.
Since $p_u\in R_u$ and $p_v\in R_v$, the segment $\ell_\text{s}$ starts and ends
in these regions. In between, the segment will traverse some other regions of
$\Pi$. This is true unless $(u,v)$ is an edge of $G$ whence the strong
monotonicity for the pair is trivial. We assume non-degeneracy
in the following sense.

\noindent{\itshape Non-degeneracy:} The interior of the segment $\ell_\text{s}$
contains no vertex-point $p_w$, edge-point $p_e$, or face-point $p_f$.
\smallskip

\noindent
M\"obius transformations of the plane map circle packings to circle
packings. In fact the primal-dual circle packing of $G$ is unique up
to M\"obius transformation, see~\cite{st-icp-05}. Now any degenerate primal-dual
circle packing of $G$ can be mapped to a non-degenerate one by a
M\"obius transformation. This justifies the non-degeneracy
assumption. Later we will give a more direct handling of
degenerate situations.

Let $u=w_0,w_1, \ldots ,w_k=v$ be the sequence
of vertices whose region is intersected by $\ell_\text{s}$, in the order of
intersection from left to right; see~Figure~\ref{fig:3conMonPath} and let $p_i=p_{w_i}$.
We will construct a strongly monotone path $P$ from $p_u$ to $p_v$ in $\Gamma$
that contains $p_u=p_0,p_1, \ldots ,p_k=p_v$ in this order.
Let~$P_i$ be the subpath of $P$ from $p_{i-1}$ to $p_i$. 
Since $\ell_\text{s}$ may revisit a vertex-region, it is possible that
$p_{i-1}=p_i$; in this case we set $P_i = p_i$. Now suppose that
$p_{i-1}\neq p_i$. Non-degeneracy implies that 
the segment $\ell_\text{s}$ alternates between
vertex-regions and face-regions; hence, a unique
disk~$D_f$ is intersected by $\ell_\text{s}$ between the regions of
$w_{i-1}$ and $w_i$. It follows that $w_{i-1}$ and~$w_i$ are 
vertices on the boundary of $f$. The boundary of $f$ contains
two paths from $w_{i-1}$ to~$w_i$. In $\Gamma$, one of these two
paths from $p_{i-1}$ to $p_i$ is above $D_f$; we call it
the \emph{upper path}, the other one is below $D_f$, this is the \emph{lower path}.
If the center $p_f$ of $D_f$ lies below~$\ell$, we choose the upper path
from~$p_{i-1}$ to~$p_i$ as $P_i$; otherwise, we choose the lower path.

\begin{figure}[tb]
\centering
\includegraphics[scale=1]{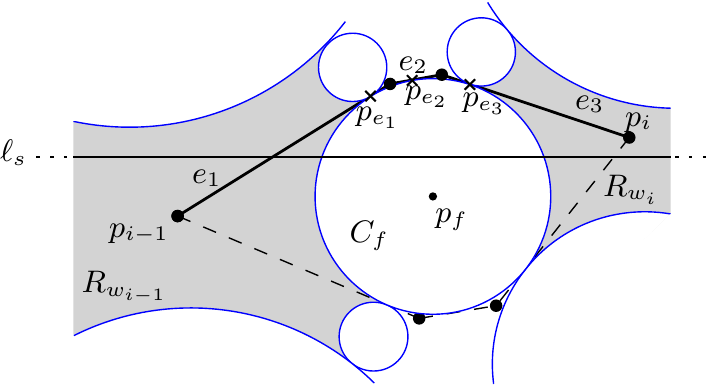}
\caption{The path~$P_i$ connecting $p_{i-1}$ and $p_i$.}
\label{fig:3conDetour}
\end{figure}

Suppose that this rule led to the choice of the upper path; see
Figure~\ref{fig:3conDetour}. The case that the lower path was chosen
works analogously. We have to show that $P_i$ is monotone with respect
to $\ell$, i.e., to the $x$-axis.  Let $e_1,\ldots,e_r$ be the edges
of this path and let $e_j=(q_{j-1},q_j)$; in particular $q_0=p_{i-1}$
and $q_r=p_i$. Since $R_{w_{i-1}}$ is star-shaped with center
$p_{i-1}$, the segment connecting $p_{i-1}$ with the first
intersection point of $\ell$ with $C_f$ belongs to
$R_{w_{i-1}}$. Therefore, the point $p_{e_1}$ of tangency of edge
$e_1$ at $C_f$ lies above $\ell$.  Similarly, $p_{e_r}$ and, hence,
all the points $p_{e_j}$ lie above $\ell$. Since the points
$p_{e_1},\ldots,p_{e_r}$ appear in this order on $C_f$ and the center
of $C_f$ lies below~$\ell$, we obtain that their $x$-coordinates are
increasing in this order.  This sequence is interleaved with the
$x$-coordinates of $q_0,q_1,\ldots,q_r$, whence this is also monotone.
This proves that the chosen path $P_i$ is monotone with respect to
$\ell$.  Monotonicity also holds for the concatenation
$P=P_1+P_2+\ldots +P_k$; see~Figure~\ref{fig:3conMonPath}.

We have shown strong monotonicity 
under the non-degeneracy assumption. Next we consider degenerate cases
and show how to find  strongly monotone paths in these cases.

If $\ell_\text{s}$ contains a vertex-point $p_w$ with $w\neq u,v$, the
path $P$ between $u$ and $v$ is just the concatenation of monotone
paths between the pairs $u,w$ and $w,v$; hence, it is strongly
monotone.  Next suppose that $\ell_\text{s}$ contains an edge-point
$p_e$.  If the edge $e$ in $\Gamma$ is horizontal, then we also have
two vertex-points on $\ell_\text{s}$ and are in the case described
above; otherwise, we consider the region which is touching $\ell$ from
above as intersecting and the region which is touching~$\ell$ from
below as non-intersecting.  This recovers the property that there is
an alternation between vertex-regions and face-regions intersected by
$\ell_\text{s}$. Hence, the definition of the path for $u$ and~$v$
gives a strongly monotone path unless it contains a vertical edge.
The use of a vertical edge can be excluded by properly adjusting
degeneracies of the form $p_f\in \ell$. For faces $f$ with~$p_f\in
\ell$, we use the upper path, i.e., we consider $p_f$ to be
below~$\ell$. Thus, even in degenerate situations the drawing
corresponding to a primal-dual circle packing is strongly
monotone. This concludes the proof.
\end{proof}

\section{Trees and Outerplanar Graphs}\label{sec:trees-outerplanar}

Kindermann et al.~\cite{kssw-omdt-GD14} have shown that any tree has a strongly 
monotone drawing and that any irreducible binary tree
has a strictly convex strongly monotone drawing. 
They left as an open question whether every tree admits a convex strongly 
monotone drawing; noticing that, in the positive case, this would imply that 
every Halin graph has a  convex strongly monotone drawing.

In this section, we show that every tree has a convex strongly monotone drawing. 
Moreover, if the tree is irreducible, then the drawing is strictly convex.
We use the result on trees to prove that every 
outerplanar graphs admits a strongly monotone drawing.

\begin{theorem}\label{theorem:tree}
Every tree has a convex strongly monotone drawing.
If the tree is irreducible, then the drawing is strictly convex.
\end{theorem}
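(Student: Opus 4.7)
The plan is to prove the following strengthened inductive claim, from which the theorem follows by choosing a sector of width just below $\pi$: for any tree $T$, any designated root $r$, any point $p$ in the plane, and any open angular sector $S$ at $p$ of width $w\le\pi$ (with $w<\pi$ if $T$ is irreducible), one can draw $T$ with $r$ at $p$, every other vertex strictly inside $S$, so that the drawing is strongly monotone and convex (resp.\ strictly convex) when augmented by leaf rays that point into $S$. The induction is on $|V(T)|$; the base cases of one vertex or one edge are immediate.

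For the inductive step, let $c_1,\ldots,c_k$ be the children of $r$ in rotation order. I would partition $S$ into open sub-sectors $S_1,\ldots,S_k$ separated by positive-width buffer gaps, place each $c_i$ on the bisecting ray of $S_i$ at a distance $\delta_i$ from $p$, choose a sector $S_i'$ of width $w_i'<\pi$ at $c_i$ whose closure (together with its augmenting leaf rays) is contained in $S_i$ as viewed from $p$, and recursively draw the subtree $T_i$ inside $S_i'$. Since similarities preserve angles, each recursive drawing remains strongly monotone after placement. Convexity at $r$ follows from the buffer gaps keeping all angles at $r$ below $\pi$; in the irreducible case every internal vertex has degree at least~$3$, so the buffers can be taken strictly positive and strict convexity is preserved.

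The hard part is strong monotonicity for the newly introduced paths. Paths lying entirely inside some $T_i$ are handled by induction, and paths from $r$ to $x\in T_i$ are strongly monotone because $\vec{rx}\in S_i$ while every edge of the path lies in $S_i$. The crucial case is a path from $u\in T_i$ to $w\in T_j$ with $i\neq j$: the direction $\vec{uw}$ points from inside $S_i$ to inside $S_j$ through a neighborhood of $r$, while the edges of the path visit first the interior of $S_i$, then the two edges $c_ir$ and $rc_j$, then the interior of $S_j$. Forcing each such edge to make an angle strictly less than $\pi/2$ with $\vec{uw}$ requires a quantitative estimate: the sibling sectors and their orientations must be narrow enough, and the buffer gaps wide enough, that every direction realizable by an edge inside $S_i$ or $S_j$ is within $\pi/2$ of every direction connecting a point in $S_i$ to a point in $S_j$.

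To satisfy this I would take the sub-sectors at each level to have width at most $\pi/2-\eps$ for a small positive $\eps$ (which may shrink with depth to leave room for further subdivisions), and then pick each $\delta_i$ small enough that the wider sector $S_i'$ at $c_i$ still fits inside the narrow $S_i$ seen from $p$. The main technical obstacle is orchestrating these quantitative choices — the widths of the $S_i$, the buffer gap magnitudes, and the placement distances $\delta_i$ — consistently across the recursion so that crossing-freeness, (strict) convexity, and strong monotonicity for every pair of vertices hold simultaneously at every level.
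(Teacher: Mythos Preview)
Your recursive divide-and-conquer approach is genuinely different from the paper's, but the proposal has a real gap in exactly the place you flag as the ``main technical obstacle,'' and the specific fix you suggest does not work.

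The concrete problem is the claim that $S_i'$ can be taken \emph{wider} than $S_i$ and still fit inside $S_i$ by choosing $\delta_i$ small. Once you include the augmenting leaf rays this is impossible: those rays go to infinity in directions spanning the full angular width of $S_i'$, and from $p$ (which is close to $c_i$ when $\delta_i$ is small) they subtend essentially that same angular width. So containment forces the angular range of $S_i'$ to lie inside that of $S_i$, not the other way around. More importantly, your argument ``every edge of the path lies in $S_i$'' conflates vertices lying in a sector with edge \emph{directions} lying in an angular range; the former does not give the latter, and without the recursive constraint $S_i'\subset S_i$ (as angular ranges) edge directions drift. Here is a failure: take $S_1$ centred at $-\pi/8$, $S_2$ centred at $3\pi/8$, each of width $\pi/4$, with $|c_1|=|c_2|$. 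If $S_1'$ has width close to $\pi$ (centred on the prolongation of $\vec{rc_1}$), a child $d$ of $c_1$ may be placed at direction $3\pi/8-\varepsilon$ from $c_1$. Then the edge $d\to c_1$ has direction about $11\pi/8$, while $\vec{dc_2}\approx\vec{c_1c_2}$ has direction about $5\pi/8$; the angle between them is about $3\pi/4>\pi/2$, so the path $d\to c_1\to r\to c_2$ is not strongly monotone. Bounding widths by $\pi/2-\varepsilon$ alone does not prevent this, because the drift compounds across levels.

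The paper sidesteps the whole difficulty by working iteratively rather than recursively. It grows the tree one leaf-expansion at a time: given a current drawing satisfying the invariants, pick a leaf $a_i$ and, for every already-drawn vertex $y$, form the open cone $C(y)$ of points $x$ such that the existing $y$--$a_i$ path is monotone with respect to $\vec{yx}$. Strictness of the current drawing guarantees that $R=\bigcap_y C(y)$ contains an open ``pizza slice'' at $a_i$; the new children are placed on a small circular arc inside $R$. Because $R$ is computed from the entire current drawing, strong monotonicity between the new vertices and \emph{every} existing vertex is automatic, and the cross-subtree interaction that blocks your recursion never has to be analysed.
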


\begin{proof}
  We actually prove something stronger, namely, that any tree $T$ has
  a drawing $\Gamma$ with the following properties:
\begin{enumerate}[label=(I\arabic*)]
\item\label{enum:prop-leaf}
  Every leaf of $T$ is placed on a corner of the convex hull of the vertices in $\Gamma$.
\item\label{enum:prop-ccw} If $a_1,\ldots,a_\ell$ is the
  counterclockwise order of the leaves on the convex hull, then for
  $i=1,\ldots,\ell$ the vectors
  $(\vec{a_ia_{i-1}})^\perp$, $\vec{p_{i},a_i}$,
  $(\vec{a_{i+1}a_i})^\perp$ appear in counterclockwise radial order,
  where $p_{i}$ denotes the unique vertex adjacent to~$a_i$.
\item\label{enum:prop-convex}
  The angle between two consecutive edges incident to a vertex $v\in
  V(T)$ is at most $\pi$ and is equal to $\pi$ only when $v$ has
  degree two.
\item\label{enum:prop-mono}
  $\Gamma$ is strongly monotone.
\end{enumerate}

Let $T$ be a tree on at least 3~vertices, rooted at some vertex~$v_0$
with degree at least~2.  We inductively produce a drawing of~$T$.  We
begin with placing the root $v_0$ at any point in the plane and the
children $u_1,\ldots,u_k$ of $v_0$ at the corners of a regular $k$-gon
with center~$v_0$.  The resulting drawing clearly fulfills the four
desired properties.

Let $T^-$ be a subtree of $T$ and let $\Gamma^-$ be a drawing of $T^-$
that fulfills the
properties~\ref{enum:prop-leaf}--\ref{enum:prop-mono}.  Let $a_i$ be a
leaf of $T^-$ and $u_1,\dots, u_k$ be the children of $a_i$ in
$T$. Let $T^+$ denote the subtree of $T$ induced by $V(T^-)\cup
\{u_1,\dots,u_k\}$.  In the inductive step, we explain how to extend
the drawing $\Gamma^-$ of $T^-$ to a drawing $\Gamma^+$ of $T^+$ such
that it fulfills the
properties~\ref{enum:prop-leaf}--\ref{enum:prop-mono}.

\begin{figure}[tb]
  \centering
  \subfloat[\label{fig:tree_Tk_1}]{
    \centering
    \includegraphics{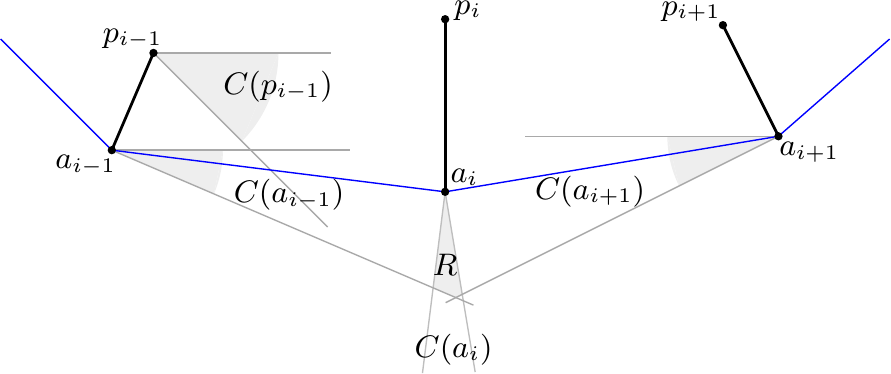}}
  \hfill
  \subfloat[\label{fig:tree_Tk_2}]{
    \centering
    \includegraphics{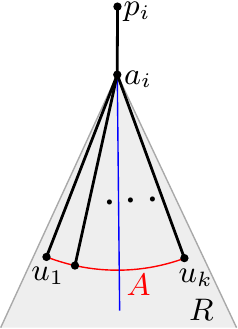}}
  \caption{
    (a)~The region~$R$ 
    which is used  for placing  all the children of vertex~$a_i$. 
    The boundary of the convex hull is drawn blue.
    (b)~Placement of the children $u_1,\ldots,u_k$ on the arc~$A \subset R$. 
    The prolongation~$h_{p_i,a_i}$ is drawn blue, the arc $A$ is drawn red.
    }
  \label{fig:tree_Tk_12}
\end{figure}

We first define a region $R$ which is appropriate for the placement of
$u_1,\dots,u_k$; see Figure~\ref{fig:tree_Tk_1} for an illustration.
Let $C(a_i)$ be the open cone containing all points $x$ such that the
vectors $(\vec{a_ia_{i-1}})^\perp$, $\vec{a_ix}$, and
$(\vec{a_{i+1}a_i})^\perp$ are ordered counterclockwise.
From property~\ref{enum:prop-ccw}, it follows
that $C(a_i)$ contains the \emph{prolongation} $h_{p_i,a_i}$ of
$\vec{p_ia_i}$, i.e., the ray that starts with $\vec{p_ia_i}$ and
extends across $a_i$.  For every vertex $y\neq a_i$ of~$T^-$, let
$C(y)$ be the open cone consisting of all points $p$ such that the
path from $y$ to $a_i$ in~$T^-$ is strictly monotone with respect
to~$\vec{yp}$.  Since the drawing $\Gamma^-$ is strongly monotone in a
strict sense,~$C(y)$ contains an open disk centered at $a_i$.  We
define the region $R$ to be the intersection of all these cones, i.e.,
$R=\cap_{y\in V(T^-)}{C(y)}$.  The intersection of the cones
$\{C(y)\mid y\in V(T^-)\setminus\{a_i\}\}$ contains an open disk
centered at $a_i$. The intersection of this disk with $C(a_i)$ yields
an open `pizza slice' contained in $R$. In particular, $R$ is
non-empty.

\begin{figure}[tb]
  \centering
  \subfloat[\label{fig:tree_Tk_4}]{
    \centering
    \includegraphics{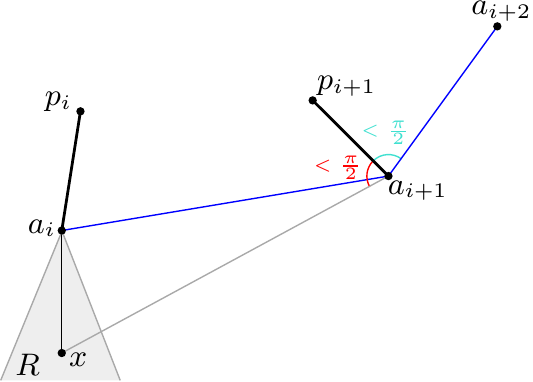}}
  \hfill
  \subfloat[\label{fig:tree_Tk_3}]{
    \centering
    \includegraphics{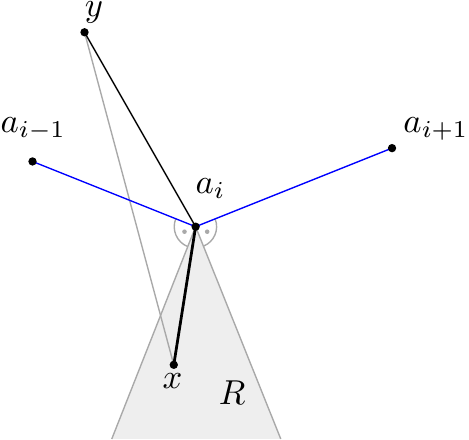}}
  \caption{
    (a)~An illustration for the proof of  property~\ref{enum:prop-leaf} and property~\ref{enum:prop-ccw}.  (b)~An illustration of the case where $y\in V(T^-)$ and $x \in \{u_1,\dots,u_k\}$.
    }
  \label{fig:tree_Tk}
\end{figure}

Since $R$ is an open convex set, we can construct a circular arc~$A$
in $R$ with center~$a_i$ that contains points on both sides of the
prolongation $h_{p_i,a_i}$ of $\vec{p_ia_i}$; see
Figure~\ref{fig:tree_Tk_2}.  We place the vertices $u_1,\dots,u_k$ on
the arc $A$ such that $\angle{p_i a_i u_1}=\angle{u_k a_i p_i}$.  This
placement implies that in case $a_i$ has degree $2$, $\angle {p_ia_i
  u_1}=\angle{u_k a_i p_i}=\pi$, and otherwise all the angles $\angle
{p_ia_i u_1}$, $\angle{u_k a_i p_i}$, $\angle{u_j a_i u_{j+1}}$, for
$j=1,\ldots,k-1$, are all less than $\pi$. This ensures
property~\ref{enum:prop-convex}.

Next, we prove that the drawing~$\Gamma^+$ of $T^+$ fulfills
property~\ref{enum:prop-leaf}.  We first show that $a_{i-1}$ and
$a_{i+1}$ lie on the convex hull of $\Gamma^+$; see
Figure~\ref{fig:tree_Tk_4}.  Consider the path from $a_{i+1}$ to~$a_i$
in~$T^-$, and let $x$ be a point in $R$. By definition of $R$, this
path is monotone (in a strict sense) with respect to $\vec{a_ix}$;
therefore, $\angle{p_{i+1}a_{i+1}x} < \pi/2$.  Considering the
strictly monotone path from $a_{i+2}$ to~$a_{i+1}$ in~$T^-$ we obtain
that $\angle{a_{i+2} a_{i+1} p_{i+1}} < \pi/2$.  The two inequalities
above sum up to $\angle{xa_{i+1}a_{i+2}} <\pi$ which means that
$a_{i+1}$ lies on the convex hull of $\Gamma^+$.  Analogously, we
obtain that $a_{i-1}$ lies on the convex hull of $\Gamma^+$.

Notice that at least one of $u_1,\ldots,u_k$ lies on the convex hull of $\Gamma^+$
since they are placed outside of the convex hull of $\Gamma^-$. On the other 
hand, the construction of the circular arc $A$ on which they are placed ensures 
that all of them lie on the convex hull of $\Gamma^+$. 

For property~\ref{enum:prop-ccw}, observe that $\angle{x a_i a_{i+1}}
> \pi/2$ holds for every $x \in C(a_i)$ (see
Figure~\ref{fig:tree_Tk_4}), and therefore $\angle{a_{i+1} x a_i} <
\pi/2$, as these two angles lie in the triangle $\triangle
xa_ia_{i+1}$. The last inequality implies property~\ref{enum:prop-ccw}
for $\Gamma^+$.

Finally, we show that property~\ref{enum:prop-mono} holds, i.e., that
$\Gamma^+$ is a strongly monotone drawing.  Consider $x,y \in V(T^+)$,
let $P_{xy}$ denote the path between $x$ and $y$ in $T^+$. We
distinguish the following three cases:
\begin{enumerate}
 \item If $x,y \in V(T^-)$, 
then the path  $P_{xy}$ is contained in~$T^-$. 
Since $\Gamma^-$ is a strongly monotone drawing by induction hypothesis, 
$P_{xy}$ is strongly monotone.
\item If $y \in V(T^-)$ and $x \in \{u_1,\dots,u_k\}$, then $P_{yx} =
  P_{ya_i} + (a_i,x)$; refer to Figure~\ref{fig:tree_Tk_3}.  The path
  $P_{ya_i}$ is monotone with respect to $\vec{yx}$ by construction
  because $x \in A \subset R \subset C(x)$. The definition of~$R$ also
  implies that $\angle a_{i-1}a_ix$ and $\angle a_{i+1}a_ix$ are
  greater than $\pi/2$.  Since~$y$ lies inside the convex hull of
  $\Gamma^-$, the smallest angle~$\angle ya_ix$ is also greater than
  $\pi/2$. Thus, $\angle{a_i x y}<\pi/2$ which implies that the
  vector~$\vec{xa_i}$ is monotone with respect to~$\vec{xy}$. We
  conclude that $P_{xy}$ is strongly monotone.
\item 
If $x,y \in \{u_1,\dots,u_k\}$, 
then the path $P_{xy}=(x,a_i)+(a_i,y)$ is strongly monotone 
since $x$ and $y$ are placed on the circular arc~$A$ centered at~$a_i$.
\end{enumerate}

\noindent
We have proven that each tree has a drawing that fulfills the four
properties~\ref{enum:prop-leaf}--\ref{enum:prop-mono}.
Property~\ref{enum:prop-ccw} implies that the prolongations of the
edges incident to the leaves do not intersect. This, together with
property~\ref{enum:prop-convex}, implies the convexity of the drawing
and strong convexity in case of an irreducible tree. This concludes
the proof of the theorem.
\end{proof}

\begin{theorem}
Every outerplanar graph has a convex strongly monotone drawing.
\end{theorem}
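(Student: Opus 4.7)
I will proceed by induction on $|V(G)|$, following the same inductive template as the proof of Theorem~\ref{theorem:tree}. The base case ($|V(G)| \le 3$) is trivial. Since $G$ is outerplanar, every vertex lies on the outer face and the minimum degree is at most $2$; hence there is always a vertex $v$ of degree $1$ or $2$ on the outer face. Let $a$ (and $b$, if $\deg(v)=2$) denote its neighbor(s) on the outer face. The graph $G' := G - v$ is still outerplanar and, by induction, admits a convex strongly monotone drawing $\Gamma'$ in which $a$ (and $b$ consecutively, if applicable) appears on the outer face.

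To extend $\Gamma'$ to a drawing $\Gamma$ of $G$, I place $v$ outside the convex hull of $\Gamma'$ in a region $R$ defined in the spirit of the region $R$ from the proof of Theorem~\ref{theorem:tree}. For each vertex $y \in V(G')$, let $C(y)$ be the set of positions $p$ from which at least one of the two paths $p \to a \to \cdots \to y$ or $p \to b \to \cdots \to y$ (obtained by prepending the new edge(s) to a strongly monotone $a$--$y$ or $b$--$y$ path guaranteed by $\Gamma'$) is strongly monotone with respect to $\vec{py}$. The region $R$ is the intersection of all these cones, further restricted to a wedge outside the hull facing the outer-face arc from $a$ to $b$. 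Placing $v$ anywhere in $R$ extends $\Gamma'$ to a convex strongly monotone drawing of $G$. The verification that analogues of \ref{enum:prop-leaf}--\ref{enum:prop-mono} are preserved follows the three-case analysis of the tree proof, with the cases distinguishing whether the path endpoints lie in $V(G')$ or equal $v$.

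\textbf{Main obstacle.} The principal difficulty is showing that $R$ is non-empty in the $\deg(v)=2$ case. In the tree setting, each $y$ contributes a single cone via its unique path to the active leaf; here, $C(y) = C_a(y) \cup C_b(y)$ is a union of two cones, one per route. I would argue non-emptiness by partitioning $V(G')$ along the outer cycle: vertices on the $a$-side yield a favorable $C_a(y)$, while those on the $b$-side yield a favorable $C_b(y)$, so that together the two families admit an open wedge near the midpoint of the outer arc between $a$ and $b$ where $v$ can be placed. A further technical point concerns the case when $(a,b) \notin E(G)$: one must verify that $G'$ remains outerplanar with $a$ and $b$ consecutive on its outer face, so that the induction applies and that inserting $v$ does not create crossings with the rest of the drawing. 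Once $R$ is shown non-empty, the remaining verifications of planarity, convexity, and strong monotonicity parallel those of the tree proof.
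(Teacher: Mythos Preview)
Your approach is genuinely different from the paper's. The paper does \emph{not} induct on $|V(G)|$; instead it attaches two dummy leaves to every vertex, takes a spanning tree $T$ of the resulting outerplanar graph $H$ (which is now irreducible), applies Theorem~\ref{theorem:tree} to get a strictly convex strongly monotone drawing of $T$ with the correct rotation system, reinserts the non-tree edges (each splits a strictly convex face into two strictly convex faces, so planarity, convexity and strong monotonicity are trivially preserved), and finally deletes the dummies. The whole argument is three short paragraphs because all the work was already done in the tree theorem.

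Your direct induction, by contrast, has a real gap in the degree-$2$ case, and it is precisely the one you flag. In the tree proof the region $R$ is non-empty because every cone $C(y)$ contains an open disk around the \emph{single} attachment point $a_i$; here you have two attachment points $a$ and $b$, and your $C(y)=C_a(y)\cup C_b(y)$ is a union. Your partition idea does not close the gap: $\bigcap_{y\in Y_a} C_a(y)$ is only guaranteed to contain points arbitrarily close to $a$, while $\bigcap_{y\in Y_b} C_b(y)$ is only guaranteed to contain points close to $b$, and nothing forces these two neighbourhoods to overlap in the wedge outside the hull. You would need a much stronger inductive invariant (quantitative slack on all monotone paths, uniform over the drawing) to push either family all the way to the ``midpoint''. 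A second, independent problem: when $(a,b)\notin E(G)$, removing $v$ does \emph{not} leave $a$ and $b$ consecutive on the outer face of $G'$---the inner face of $G$ incident to $v$ has boundary $a,v,b,c_1,\dots,c_k,a$, and after deleting $v$ the outer boundary of $G'$ passes through $c_1,\dots,c_k$ between $b$ and $a$. So the induction hypothesis as you state it cannot even be invoked, and the new inner face you create by reinserting $v$ need not be convex. The spanning-tree reduction in the paper avoids both difficulties entirely.
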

\begin{proof}
  Let $G$ be an outerplanar graph with at least 2 vertices.  For every
  vertex $v \in V$, we add two dummy vertices $v',v''$ and edges
  $(v,v'),~(v,v'')$.  By construction, the resulting graph~$H$ is
  outerplanar and does not contain vertices of degree~2.  
  Let~$\Gamma_{H}$ be an outerplanar drawing of~$H$. We will
  construct a convex strongly monotone drawing~$\Gamma_{H}'$ of~$H$
  with the same combinatorial embedding as $\Gamma_H$.

Let $T$ be an arbitrary spanning tree of~$H$.
By construction, no vertex in~$T$ has degree~2.
Thus, according to Theorem~\ref{theorem:tree}, 
$T$ admits a strongly monotone drawing~$\Gamma_{T}$ 
which is strictly convex and
which also preserves the order of the children 
for every vertex, i.e., the rotation system coincides with the one in~$\Gamma_{H}$.

Now, we insert all the missing edges. 
Recall that, by removing an edge from a planar drawing, the two adjacent faces are merged.
Since the drawing $\Gamma_{T}$ of~$T$ is strictly convex and 
since $\Gamma_{T}$ preserves the rotation system of $\Gamma_{H}$,
by inserting an edge~$e$ of the graph $H$ into $\Gamma_{T}$
one  strictly convex face is partitioned into two strictly convex faces.
Furthermore, the insertion of an edge does not destroy strong monotonicity.
We  re-insert all edges of $H$ iteratively.
The resulting drawing $\Gamma_{H}'$ of $H$  is a strictly convex and strongly monotone.

Finally, we remove all the dummy vertices and obtain a strongly monotone drawing of~$G$. 
Since $\Gamma_{H}'$ has the same combinatorial embedding as $\Gamma_{H}$, 
every dummy vertex lies in the outer face.
Hence, no internal face is affected by the removal of dummy vertices, and thus all interior faces remain strictly convex.
\end{proof}

\section{2-Trees}\label{sec:2trees}
In this section, we show how to construct a strongly monotone drawing for any 2-tree.
We begin by introducing some notation.
A \emph{drawing with bubbles} of a graph $G=(V,E)$ is a straight-line drawing of $G$
in the plane such that, for some $E'\subseteq E$, every edge~$e\in E'$ is associated 
with a circular region in the plane, called a \emph{bubble}~$B_e$; see 
Figure~\ref{fig:2:bubbles}. 
An {\em extension} of a drawing with bubbles is a straight-line drawing that is
obtained by taking some subset of edges with bubbles $E''\subseteq E'$ and
stacking one vertex on top of each edge $e\in E''$ into the corresponding
bubble $B_e$; see Figure~\ref{fig:2:extension}.
(Since every bubble is associated with a unique edge we often simply say
that a vertex is stacked into a bubble without mentioning the corresponding edge.)
We call a drawing with bubbles~$\Gamma$ strongly monotone 
if \emph{every} extension of~$\Gamma$ is strongly monotone.
Note that this implies that if a vertex~$w$ is stacked on top of edge~$e$ into
bubble~$B_e$, then there exists a strongly monotone path from~$w$ to any other
vertex in the drawing and, furthermore, there exists a strongly monotone path 
from~$w$ to any of the current bubbles, i.e., to any vertex that might be
stacked into another bubble.

Every 2-tree~$T=(V,E)$ can be constructed through the following iterative procedure:

\begin{enumerate}[label=(\arabic*)]
\item\label{enum:2tree-edge} We start with one edge and tag it as
  \emph{active}.  During the entire procedure, every present edge is
  tagged either as active or \emph{inactive}.
\item\label{enum:2tree-active} As an iterative step we pick one active
  edge~$e$ and stack vertices $w_1,\ldots,w_k$ on top of this edge for
  some $k \ge 0$ (we note that $k$ might equal $0$).  Edge $e$ is then
  tagged as inactive and all new edges incident to the stacked
  vertices $w_1,\ldots,w_k$ are tagged as active.
\item\label{enum:2tree-repeat} If there are active edges remaining,
  repeat Step~\ref{enum:2tree-active}.
\end{enumerate} 
Observe that Step~\ref{enum:2tree-active} is performed exactly once
per edge and that an according decomposition for~$T$ can always be
found by the definition of 2-trees.

\begin{figure}[tb]
  \centering
  \subfloat[\label{fig:2:bubbles}]{
    \centering
    \includegraphics[page=1]{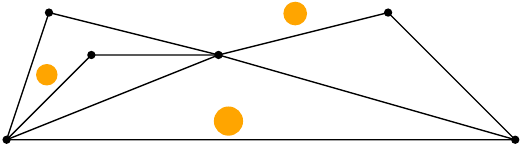}}
  \hfill
  \subfloat[\label{fig:2:extension}]{
    \centering
    \includegraphics[page=2]{extension}}
  \caption{(a) A drawing of a 2-tree with bubbles (orange) and (b) an extension of the
    drawing.}
\end{figure}

We construct a strongly monotone drawing of $T$ 
by geometrically implementing the iterative procedure described above,
so that after every step of the algorithm the present part of the graph is realized
as a drawing with bubbles. 
We use the following additional geometrical invariant:
\begin{enumerate}[label=(C)]
\item\label{item:condition}
    After each step of the algorithm every active edge comes with a bubble
    and the drawing with bubbles is strongly monotone.
    Additionaly, for an edge $e=(uv)$ with bubble $B_e$ 
    for each point~$w\in B_e$, 
    the angle~$\angle(\vec{uw},\vec{wv})$ is obtuse. 
\end{enumerate}

In Step~\ref{enum:2tree-edge}, we arbitrarily draw the edge~$e_0$ in the plane. 
Clearly, it is possible to define a bubble for~$e_0$ that only allows obtuse
angles. In Step~\ref{enum:2tree-active}, we place the vertices~$w_1,\ldots,w_k$ 
over an edge~$e=(u,v)$ as follows. The fact that stacking a vertex into~$B_e$ 
gives an obtuse angle allows us to place the to-be stacked 
vertices~$w_1,\dots w_k$ in~$B_e$ on a circular arc around~$u$ such that, for 
any~$1\le i,j\le k$, there exists a strongly monotone path between~$w_i$ 
and~$w_j$; see Figure~\ref{fig:add_new_points}. Due to 
condition~\ref{item:condition}, there also exists a strongly monotone path 
between any of the newly stacked vertices and any vertex of an extension of  
the previous drawing with bubbles. Hence, after removing the bubble~$B_e$, the resulting drawing is
a strongly monotone drawing with bubbles.

In order to maintain condition~\ref{item:condition}, it remains to describe how 
to define the bubbles for the new active edges incident to the stacked vertices. 
For this purpose, we state the following Lemma~\ref{lm:bubbles}, which enables 
us to define the two bubbles for the edges incident to any degree-2 vertex with
an obtuse angle. The Lemma is then iteratively applied to the 
vertices~$w_1,\dots ,w_k$
and after every usage of the Lemma the produced drawing with bubbles is strongly monotone.
This iterative approach is used to ensure that, when 
defining bubbles for some vertex~$w_i$, the previously added bubbles 
for~$w_1,\dots,w_{i-1}$ are taken into account.

\begin{figure}[tb]
  \centering
  \subfloat[Stacking vertices into a bubble\label{fig:add_new_points}]{
    \centering
    \includegraphics[scale=1]{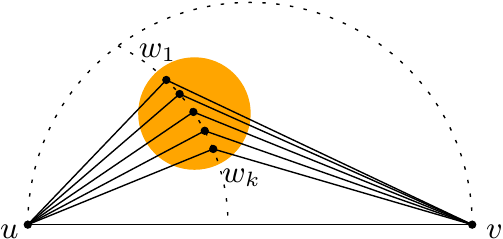}}
  \hfill
  \subfloat[The empty neighbourhood~$\mathcal N$ (dotted)\label{fig:2:planar}]{
    \centering
    \hspace{2mm}
    \includegraphics[scale=1]{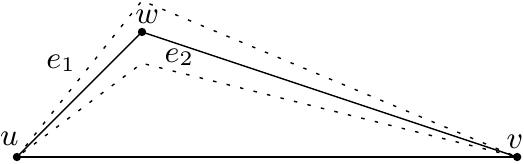}}
  \caption{Illustrations for the drawing approach for strongly monotone 2-trees.}
\end{figure}

\begin{lemma}
\label{lm:bubbles}
Let~$\Gamma$ be a strongly monotone drawing with bubbles and let~$w$ be a vertex
of degree 2 with an obtuse angle such that the two incident edges~$e_1=(u,w)$ 
and~$e_2=(v,w)$ have no bubbles. Then, there exist bubbles~$B_{e_1}$ 
and~$B_{e_2}$ for edges $e_1$ and $e_2$ respectively that only allow obtuse angles such 
that $\Gamma$ remains strongly monotone with bubbles if we add~$B_{e_1}$ 
and~$B_{e_2}$.
\end{lemma}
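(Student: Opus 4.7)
The plan is to choose $B_{e_1}$ and $B_{e_2}$ as sufficiently small open disks contained in the open Thales disks over the segments $uw$ and $vw$, respectively; since a point $p$ lies in the open Thales disk over $ab$ iff $\angle apb > \pi/2$, this membership automatically enforces the obtuse-angle requirement on the bubbles. Concretely, fix a small $\eps>0$ and pick representative centers
\[
p_1 = w - \eps\,\frac{\vec{wv}}{|\vec{wv}|},\qquad
p_2 = w - \eps\,\frac{\vec{wu}}{|\vec{wu}|}.
\]
Because $\angle uwv$ is obtuse, $-\vec{wv}$ makes an acute angle with $\vec{wu}$, so $p_1$ lies on the $u$-side of the tangent to the Thales circle of $uw$ at~$w$ and hence in the open Thales disk for all small~$\eps$; symmetrically for $p_2$.

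Next, I would verify that the drawing obtained by placing stacked vertices exactly at~$p_1$ and~$p_2$ is strongly monotone. The only non-trivial pairs involve $p_1$ or $p_2$. If the partner $x$ lies in $\{u,v,w\}$, the path from $p_1$ is one or two edges and monotonicity is immediate. For the pair $(p_1,p_2)$, use $p_1\to w\to p_2$: a dot-product computation using $\vec{wu}\cdot\vec{wv}<0$ (the algebraic form of obtuseness at~$w$) yields both $\vec{p_1w}\cdot\vec{p_1p_2}>0$ and $\vec{wp_2}\cdot\vec{p_1p_2}>0$. For any other partner~$x$ (a fixed vertex of $\Gamma$ or one hypothetically stacked in a pre-existing bubble), start from a strongly monotone path $P_{wx}$ in the appropriate extension of~$\Gamma$. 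Since $w$ has degree two, $P_{wx}$ begins with $(w,u)$ or $(w,v)$. In the former case, route $p_1\to u\to\dots\to x$ using the tail of $P_{wx}$ after~$u$; in the latter case, prepend $(p_1,w)$ to all of $P_{wx}$. As $\eps\to 0$ one has $\vec{p_1x}\to\vec{wx}$, so each of the finitely many angle conditions on the new path converges to a strict inequality already implied by strong monotonicity of $P_{wx}$; by openness, all of them hold for small enough~$\eps$.

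Finally, I would promote this single-configuration verification to genuine bubbles. Strong monotonicity of the enlarged drawing is a conjunction of finitely many strict inequalities, each a continuous function of the coordinates of $p_1$, $p_2$, and of any vertex potentially stacked into a pre-existing bubble. Each such strict inequality is an open condition, and the closures of the pre-existing bubbles are compact, so there exist open disks $B_{e_1}\ni p_1$ and $B_{e_2}\ni p_2$ (still inside the respective Thales disks) on which all inequalities hold uniformly over every admissible extension; these are the desired bubbles.

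The main obstacle is this last uniformity step: we must ensure the inequalities survive simultaneously over all choices of $p_1\in B_{e_1}$, $p_2\in B_{e_2}$, and all positions of vertices stacked in pre-existing bubbles. Compactness of bubble closures together with the finiteness of the graph reduces it to a routine openness-plus-compactness argument; the genuinely geometric content of the lemma lies in the choice of~$p_1,p_2$ and the dot-product verification of the path $p_1\to w\to p_2$, which is precisely where the obtuse angle hypothesis at~$w$ is used.
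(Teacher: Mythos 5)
Your proposal is correct and follows essentially the same route as the paper: you place the bubble centers on the prolongations of the opposite edges beyond $w$, use the Thales semicircles to guarantee obtuseness, reroute a strongly monotone path to $w$ according to whether it arrives via $e_1$ or $e_2$, and invoke compactness/openness to make the finitely many strict angle inequalities uniform over all extensions. The paper packages that last step as an explicit uniform safety margin $\alpha>0$ and angle bounds of $\alpha/4$ on the perturbations, but the substance is identical.
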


\begin{proof}
  We begin by describing how we determine the size and location of the
  new bubbles.  Since~$\Gamma$ is planar, there exists a
  neighborhood~$\mathcal N$ of~$w$, $e_1$ and~$e_2$ that does not
  contain elements of any extension of~$\Gamma$; see
  Figure~\ref{fig:2:planar}.

  Furthermore, consider any extension~$\Gamma'$ of~$\Gamma$. Since we
  consider monotonicity in a strict fashion, there exists a
  constant~$\alpha(\Gamma')>0$ such that, for any pair of
  vertices~$s_0,s_t$ of~$\Gamma'$ and for any strongly monotone
  path~$P=(s_0,\dots ,s_t)$ it holds that $\angle
  (\vec{s_0s_t},\vec{s_{i}s_{i+1}})<
  \pi/2-\alpha(\Gamma')$ for $i=0,\dots,t-1$. We refer to this property
  of~$P$ as being~\emph{$\alpha(\Gamma')$-safe} with respect
  to~$\vec{s_os_t}$.  A simple compactness argument shows that this
  safety parameter can be chosen simultaneously for all the extensions
  of $\Gamma$: there exists $\alpha>0$ such that for every extension
  $\Gamma'$ of $\Gamma$ for every two vertices~$s_0,s_t$ of~$\Gamma'$
  every strongly monotone path connecting these vertices is
  ~$\alpha$-safe with respect to~$\vec{s_os_t}$.  (This global
  constant can be chosen as $\alpha := \min_{\Gamma'}\alpha(\Gamma')$,
  where the minimum is taken over all the extensions $\Gamma'$ of
  $\Gamma$, and the minimum is strictly positive since the set of
  extensions is compact.)

  For the edge $e_1$, we define the bubble $B_{e_1}$ as the circle of
  radius $r$ with center  at the extension of the edge
  $e_2$ over $w$ with distance $\eps$ to $w$ as depicted in
  Figure~\ref{fig:add_new_bubble}.  In order to ensure the strong
  monotonicity, we choose $r$ and $\eps$ such that the following
  properties hold (these properties clearly hold as soon as $r$,
  $\eps$ and $r/\eps$ are small enough):

\begin{figure}[tb]
  \centering
  \subfloat[Semi circles around~$e_1$ and~$e_2$\label{fig:add_new_bubble}]{
    \centering
    \includegraphics[scale=1]{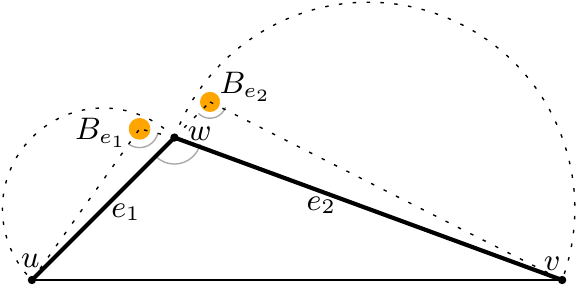}}
  \hfill
  \subfloat[Cones centered at~$u$ and~$w$\label{fig:add_new_numbers}]{
    \centering
    \includegraphics[scale=1]{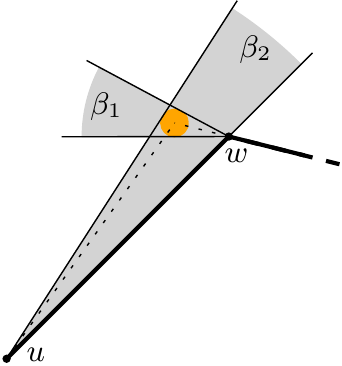}}
  \hfill  
  \subfloat[Cone centered at~$y$\label{fig:add_new_cone}]{
    \centering
    \hspace{3mm}
    \includegraphics[scale=1]{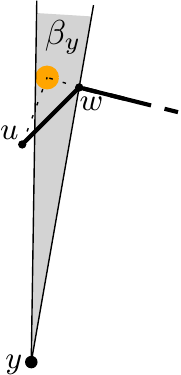}}
  \caption{Illustrations for the placement of the new bubbles~$B_{e_1}$ and~$B_{e_2}$}
\end{figure}

\begin{enumerate}[label=(\roman*)]
  \item \label{lemma_item_1}
    Bubble $B_{e_1}$ is located inside the  empty neighborhood $\mathcal N$. 
    Moreover, to preserve obtusity, $B_{e_1}$ needs to lie inside the semicircle 
    with edge $e_1$ as diameter, as depicted in 
    Figure~\ref{fig:add_new_bubble}.
  \item \label{lemma_item_2}
    Consider angles $\beta_1$ and $\beta_2$ as illustrated in 
    Figure~\ref{fig:add_new_numbers}. We require that both angles are smaller 
    than $\alpha/4$.
  \item \label{lemma_item_3}
    For any vertex~$y$ of any extension of~$\Gamma$, consider the angle $\beta_y$
    as illustrated in Figure~\ref{fig:add_new_cone}.
    We require that this angle is smaller than $\alpha/4$.
    That guarantees that for any 
    point~$x\in B_{e_1}$ it holds that $\angle (\vec{yw},\vec{yx})<\alpha/4$.
\end{enumerate}

We define the bubble $B_{e_2}$ for the edge $e_2$ analogously with
$B_{e_1}$.  Moreover, we can use the same pair of parameters $r$
and $\eps$ for $B_{e_1}$ and $B_{e_2}$.

For the strong monotonicity of the drawing $\Gamma$ with two new
bubbles $B_{e_1}$ and $B_{e_2}$ we have to show two conditions:
(1)~that from any vertex stacked into one of the new bubbles there
exists a strongly monotone path to any vertex~$y$ of any extension
of~$\Gamma$ and (2)~that there exists a strongly monotone path between
any vertex stacked into $B_{e_1}$ and any vertex stacked into
$B_{e_2}$.

Since we use the same pair of $r$ and $\eps$ for defining $B_{e_1}$ and $B_{e_2}$,
the condition~(2) clearly holds as soon as $r/\eps$ is small enough.
Thus we are left with ensuring that the condition (1) holds.

Consider the new bubble $B_{e_1}$, a point $x \in B_{e_1}$ and any vertex $y$
of any extension $\Gamma'$ of $\Gamma$.
Since the drawing $\Gamma'$ is strongly monotone, there exists a strongly 
monotone path~$P_{yw}$ in $\Gamma'$ between~$y$ and~$w$.
Since $w$ has only two incident edges in $\Gamma'$, the last edge of the path~$P_{yw}$
is either $e_1=(u,w)$ or $e_2=(v,w)$.
We distinguish between these two cases: in the first case we 
construct a path $P_{yx}$ from~$y$ to~$x$ by re-routing the last edge of 
$P_{yw}$ from $(u,w)$ to $(u,x)$ as illustrated in Figure~\ref{fig:add_new_paths_a};
in the second case we 
construct a path $P_{yx}$ by appending the edge $(w,x)$ to the end of
$P_{yw}$ as illustrated in Figure~\ref{fig:add_new_paths_b};

It remains to show that~$P_{yx}$ is strongly monotone.  First, observe
that~$P_{yw}$ is strongly monotone and~$\alpha$-safe.  By
property~\ref{lemma_item_2}, the final edges~$e_w$ of~$P_{yw}$
and~$e_x$ of~$P_{yx}$ satisfy $\angle(\vec{e_w},\vec{e_x})<\alpha/4$
and all other edges of these paths are identical.  Thus,~$P_{yx}$
is~$(3\alpha/4)$-safe with respect to~$\vec{yw}$.  By
Property~\ref{lemma_item_3} $\angle (\vec{yw},\vec{yx})<\alpha/4$ and,
therefore,~$P_{yx}$ is~$(\alpha/2)$-safe with respect to $\vec{yx}$
and thus in particular it is strongly monotone.

\begin{figure}[tb]
  \centering
  \subfloat[Rerouting in case the last edge of $P_{yw}$ is $e_1$.\label{fig:add_new_paths_a}]{
    \centering
    \includegraphics[scale=1]{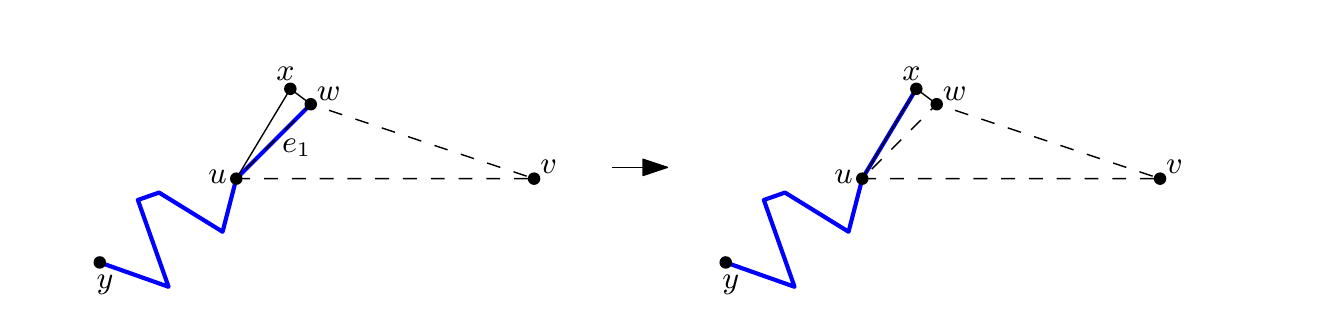}}
  \hfill
  \centering
  \subfloat[Rerouting in case the last edge of $P_{yw}$ is $e_2$.\label{fig:add_new_paths_b}]{
    \centering
    \includegraphics[scale=1]{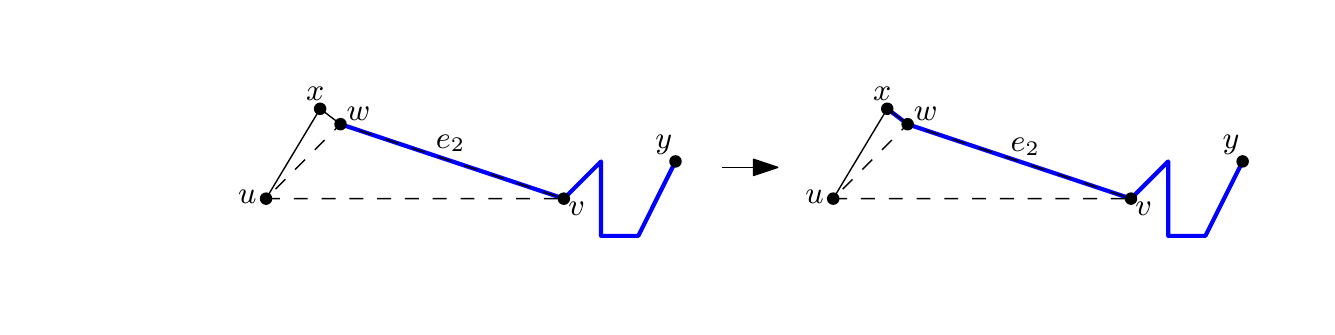}}
  \hfill
\caption{A strongly monotone path $P_{yw}$ from $y$ to $w$ is re-routed to $x$.
Two possible cases are distinguished: The last edge of $P_{yw}$ is either $e_1$ or $e_2$.
}
\label{fig:add_new_paths}
\end{figure}
The arguments for a vertex stacked on~$e_2$ into~$B_{e_2}$ are identical.
\end{proof}
Thus, we obtain the main result of this section:
\begin{theorem}\label{thm:2tree}
Every 2-tree admits a strongly monotone drawing.
\end{theorem}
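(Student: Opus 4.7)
The plan is to prove the theorem by induction on the number of stacking rounds executed in the iterative construction of the $2$-tree $T$, carrying invariant~\ref{item:condition} as the inductive hypothesis. For the base case, after Step~\ref{enum:2tree-edge} I draw the initial edge $e_0$ arbitrarily and equip it with a bubble $B_{e_0}$ small enough that any vertex stacked into it subtends an obtuse angle at that vertex (for instance, a tiny disk inside the open disk with $e_0$ as diameter). Every extension of this drawing-with-bubbles contains only two or three vertices, all pairs of which are directly connected, so strong monotonicity with bubbles is immediate and invariant~\ref{item:condition} holds.

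For the inductive step, assume invariant~\ref{item:condition} holds and perform Step~\ref{enum:2tree-active} on an active edge $e=(u,v)$ with bubble $B_e$: place $w_1,\ldots,w_k$ on a short circular arc centred at $u$ inside $B_e$ and deactivate $B_e$. I would verify that the resulting drawing-with-bubbles is strongly monotone by checking three types of vertex pairs in an arbitrary extension. (a)~Two vertices that were already present: the required strongly monotone path existed in the prior extension (obtained by simply not stacking the $w_i$'s) and survives unchanged. (b)~A newly stacked $w_i$ and any other vertex $y$ of the extension: consider the prior extension that stacks only $w_i$ into $B_e$; invariant~\ref{item:condition} furnishes a strongly monotone $y$--$w_i$ path there, and this path is unaffected by the stackings of the remaining $w_j$. (c)~Two stacked vertices $w_i,w_j$: the two-edge path $w_i\,u\,w_j$ forms an isoceles triangle at $u$, so its base angles equal $(\pi-\angle w_iuw_j)/2<\pi/2$, making the path strongly monotone with respect to $\vec{w_iw_j}$.

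To re-establish the first clause of invariant~\ref{item:condition}, I would iterate Lemma~\ref{lm:bubbles} over $w_1,\ldots,w_k$. When processing $w_i$, the current drawing-with-bubbles is strongly monotone (by the previous step and by induction on $i$), the vertex $w_i$ has degree~$2$, its two incident edges $(u,w_i)$ and $(v,w_i)$ still carry no bubbles, and the angle at $w_i$ is obtuse because $w_i\in B_e$ and invariant~\ref{item:condition} held for $B_e$ in the previous round. The lemma then produces bubbles for both edges which only admit obtuse angles and preserve strong monotonicity with bubbles, which is exactly what invariant~\ref{item:condition} demands. After all $k$ applications the invariant is restored.

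Since $T$ is finite and each iteration deactivates exactly one edge, the process terminates after $|E(T)|$ rounds. When no active edges remain there are no bubbles left, so strong monotonicity with bubbles collapses to strong monotonicity of the straight-line drawing of $T$. The main obstacle is the reasoning for case~(b): when several vertices are simultaneously stacked into the same bubble $B_e$, the resulting drawing is not literally an ``extension'' in the sense of the definition (which allows at most one stacked vertex per bubble), so the strong monotonicity delivered by invariant~\ref{item:condition} must be transferred by decomposing into single-vertex extensions. This step relies crucially on the obtuse-angle clause of invariant~\ref{item:condition}, which is what allowed us to treat the different $w_i$ essentially independently.
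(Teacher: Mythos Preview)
Your proposal is correct and follows essentially the same route as the paper: induct on the iterative construction while maintaining invariant~\ref{item:condition}, place the stacked vertices on a circular arc centred at one endpoint inside the active bubble, and then invoke Lemma~\ref{lm:bubbles} once per newly stacked vertex to furnish the fresh bubbles. Your three-case verification (old--old, old--new, new--new) makes explicit what the paper leaves somewhat terse, and you correctly flag the subtlety that stacking several $w_i$ into the same bubble is not literally an extension, resolving it by restricting to one $w_i$ at a time---this is exactly how the paper's sentence ``there also exists a strongly monotone path between any of the newly stacked vertices and any vertex of an extension of the previous drawing with bubbles'' should be read. One small expository remark: the obtuse-angle clause of~\ref{item:condition} is not really what enables the independence argument in case~(b); its actual role is to guarantee the hypothesis of Lemma~\ref{lm:bubbles} when you later equip each $w_i$ with its two new bubbles.
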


\section{Conclusion}\label{sec:conclusion}
We have shown that any 3-connected planar graph, tree, outerplanar graph, and
2-tree admits a strongly monotone drawing. All our drawings require exponential 
area. For trees, this area bound has been proven to be required; however, it
remains open whether the other graph classes can be drawn in polynomial area.
Further, the question whether any 2-connected planar graph admits a strongly 
monotone drawing remains open. 
Last but not least, we could observe (using a computer-assisted search) that 
2-connected graphs with at most~9 vertices admit a strongly monotone drawing, 
while there is exactly one connected graph with~7 vertices that is the smallest 
graph not admitting a strongly monotone drawing; 
see Figure~\ref{fig:connected_7_vertex_graph_without_str_m_drawing}.

\begin{figure}[t]
  \centering
  \includegraphics{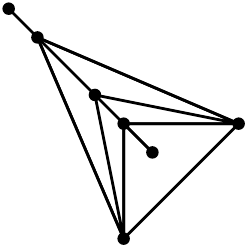}
  \caption{The unique connected 7-vertex graph without a strongly monotone drawing.}
  \label{fig:connected_7_vertex_graph_without_str_m_drawing}
\end{figure}

\bibliographystyle{plain}
\bibliography{abbrv,monotone}

\end{document}